\documentclass[pdflatex,sn-mathphys-num]{sn-jnl}

\usepackage{graphicx}%
\usepackage{multirow}%
\usepackage{amsmath,amssymb,amsfonts}%
\usepackage{amsthm}%
\usepackage{mathrsfs}%
\usepackage[title]{appendix}%
\usepackage{xcolor}%
\usepackage{textcomp}%
\usepackage{manyfoot}%
\usepackage{booktabs}%
\usepackage{algorithm}%
\usepackage{algorithmicx}%
\usepackage{algpseudocode}%
\usepackage{listings}%
\usepackage{lineno}

\usepackage{amsmath,amssymb,amsthm}

\newtheorem{theorem}{Theorem}[section]
\newtheorem{lemma}[theorem]{Lemma}
\newtheorem{proposition}[theorem]{Proposition}
\newtheorem{corollary}[theorem]{Corollary}

\theoremstyle{definition}
\newtheorem{definition}[theorem]{Definition}
\newtheorem{assumption}[theorem]{Assumption}

\theoremstyle{remark}
\newtheorem{remark}[theorem]{Remark}

\begin{document}

\title[Game-Theoretic Foundations of Competition for Conscious Access]{Game-Theoretic Foundations of Competition for Conscious Access}


\author*[1]{\fnm{Efthyvoulos} \sur{Drousiotis}}\email{Efthyvoulos.Drousiotis@liverpool.ac.uk}

\author*[2]{\fnm{Paul} \sur{Spirakis}}\email{P.Spirakis@liverpool.ac.uk}

\author*[3]{\fnm{Sotiris} \sur{Nikoletseas}}\email{nikole@cti.gr}

\affil*[1]{\orgdiv{Health Data Science Department}, \orgname{University of Liverpool},  \country{UK}}

\affil*[2]{\orgdiv{School of Computer Science and Informatics}, \orgname{University of Liverpool},  \country{UK}}

\affil*[3]{\orgdiv{Computer Engineering and Informatics Department}, \orgname{Patras University},  \country{Greece}}


\abstract{\begin{abstract}

 Conscious access in the human brain is often described as the outcome of a competition among candidate representations, but this competition is usually left at the level of
mechanism or metaphor rather than analyzed as a strategic allocation problem. We introduce an access contest in which internal modules compete for a scarce broadcast slot by choosing a costly amplification effort. Access is allocated by a
smooth probabilistic rule, allowing the model to interpolate between diffuse selection and winner-take-all competition.

We establish pure-strategy equilibrium existence under standard convexity and bounded-benefit assumptions, and give sufficient conditions for uniqueness using diagonal
strict concavity. We then analyze capture in the two-module case, where a lower-value representation can obtain a larger access probability because it is
easier to amplify. For quadratic costs, we derive a sharp threshold in the competition intensity above which capture occurs. More generally, for strongly
convex costs, we prove an if-and-only-if capture criterion in terms of the cost-adjusted amplification advantage of the lower-value module. Under the same curvature-dominance condition that guarantees uniqueness, we show that the unique pure Nash equilibrium of the general \(M\)-module access contest can be approximated efficiently by projected pseudo-gradient dynamics, with logarithmic dependence on the desired accuracy. Finally, we prove an impossibility theorem for single-slot access mechanisms. Exact winner-take-all efficiency is incompatible with robustness to small score perturbations. Thus, smooth probabilistic access rules are not merely analytically convenient, but structurally motivated.

These results provide a game-theoretic foundation for studying competition for conscious access, connecting equilibrium analysis, capture, computation, and mechanism-level limitations under a common formal model.
\end{abstract}}

\keywords{conscious access, contest theory, equilibria, logit contest success functions}

\maketitle

\section{Introduction}

\subsection{Motivation}

The brain performs substantial computation outside conscious awareness. Stimuli can bias decisions, guide actions, and shape learning even when they are not available for report. Clinical and experimental dissociations, such as blindsight and related phenomena, show that information can be processed and used without becoming consciously accessible. This motivates a basic theoretical
distinction. Information may be represented and acted upon locally, while only some representations become globally available for flexible reasoning, reporting, and control.

A common way to describe this transition is through competition. Many candidate representations are processed in parallel, but conscious access is capacity-limited. Only a small number of representations can enter the global
workspace or broadcast channel at a given time. This raises a natural allocation problem. When several internal representations compete for access, what determines which one becomes conscious?
For example, consider a student trying to solve a mathematical problem while
simultaneously worrying about an emotionally charged social interaction. The
mathematical representation may have greater task value, yet the emotionally
salient representation may be easier to amplify and may therefore dominate
conscious access. The question is then whether access tracks value, or whether
it can be captured by lower-value but more easily amplified content.

Despite the centrality of this idea, the competition for access is often left at the level of metaphor. Existing theories describe amplification, ignition, selection, or global broadcast, but they do not usually specify the access mechanism as a strategic allocation rule. As a result, several basic questions remain difficult to answer formally. When should competition for access have a stable outcome? When is the outcome unique? When does access track the most valuable representation, and when can it be captured by a lower-value but more easily amplified representation? Are there structural limitations that any single-slot access rule must satisfy?
Global workspace accounts already motivate this abstraction. They describe many specialised unconscious processes operating in parallel and competing for access
to a limited workspace before selected information is globally broadcast
\cite{baars1988cognitive,shanahan2005applying}. We use game-theoretic language in this functional sense where modules are not assumed to deliberate consciously, but are modelled as competing amplification processes.

The aim is not to provide a complete theory of consciousness, but to isolate one formal component of conscious access, the competition for a scarce broadcast slot. By modelling this competition explicitly, we obtain equilibrium, capture, computational, and impossibility results that clarify when access behaves efficiently and when it can become distorted.

\subsection{Relevant Work}

A central distinction in consciousness research separates phenomenal
consciousness from access consciousness. Phenomenal consciousness concerns
subjective experience, whereas access consciousness concerns the availability
of information for report, reasoning, and behavioural control
\cite{block1995confusion}. Empirical dissociations such as blindsight show that
information can influence behaviour without being consciously reportable
\cite{weiskrantz1986blindsight}. Our focus is on this access component, the
mechanism by which some representations become globally available while others
remain locally processed.

Since the early development of psychoanalysis, several scientists have attempted to move beyond qualitative notions of the mind towards more formal and structural models of unconscious processes. Freud introduced an “economic” abstraction of the psyche in which mental processes are captured via tensions, quantities, and the circulation of psychic energy. In his "Project for a Scientific Psychology" (1895), he described the mind as a dynamic system attempting to regulate and reduce internal tension. The human psyche was treated as a system that allocates limited psychic resources across competing desires and representations. Lacan, rather than focusing on energy models, described the unconscious as “structured like a language,” emphasizing symbolic relations, recursive structures, and networks of signifiers, using graphs, topology and logical structures.

Bion developed a theory according to which raw emotional and sensory experiences (“beta elements”) initially cannot be mentally processed and must be transformed through the “alpha function” into thinkable mental content (“alpha elements”)~\cite{bion1970attention}. In this sense, the mind operates almost like a transformation system acting on unprocessed inputs under emotional and cognitive constraints. This gives Bion’s theory an almost computational structure: the mind transforms unprocessed inputs under conditions of limited containment and emotional regulation. Failures of transformation lead to fragmentation, intrusive experiences, or attacks on linking, whereas successful transformation allows symbolic thought, dreaming, abstraction, and reflective consciousness. In this sense, Bion anticipated later computational and information-processing approaches that model cognition as the dynamic transformation and integration of internal representations under constraints.

Although these approaches remained largely conceptual rather than computational, they anticipated later attempts to model cognition and unconscious dynamics using systems theory, information theory, dynamical systems, and computational frameworks

Global workspace theory and global neuronal workspace theory provide influential
accounts of conscious access. In these theories, specialised processors operate
in parallel, while conscious access corresponds to the selection, amplification,
and broadcast of information to a wider cognitive system
\cite{baars1988cognitive,dehaene2001cognitive,dehaene2011gnw,mashour2020gnw}.
These theories naturally use the language of competition, selection, ignition,
and broadcast. However, the competition is usually described mechanistically
rather than as an explicit strategic allocation problem with equilibrium
predictions, capture conditions, or mechanism-level guarantees.

A related literature in attention studies selection under limited processing
capacity. Biased-competition models describe attention as a competition among
representations, with bottom-up salience and top-down goals biasing selection
\cite{desimone1995attention}. Experimental work on attentional capture shows
that salient or contextually favoured stimuli can dominate selection even when
they are not task-optimal
\cite{folk1992contingent,theeuwes2010topdown}. This literature motivates the
study of capture, but it typically does not model competing representations as
strategic agents with effort costs and best responses.

Game-theoretic tools have also been used to formalise social and cognitive
phenomena. For example, the volunteer's dilemma has been used to explain the
bystander effect. When several witnesses observe an emergency, each individual
may prefer that someone intervene while hoping that another witness bears the
cost of doing so \cite{diekmann1985volunteers,camposmercade2021bystander}.
Game theory has also been used in neuroscience and cognition, especially in social decision-making, neuroeconomics, and theory-of-mind modelling
\cite{lee2008gametheory,yoshida2008gtom}. Other work uses evolutionary game dynamics to model large-scale brain activity \cite{madeo2017egtbrain}. These approaches show that game-theoretic formalisms can be useful beyond explicit markets or economic settings, but they do not primarily study the access bottleneck between unconscious processing and conscious availability.

Mathematically, our model is closest to contest theory and probabilistic choice.
Contest theory studies agents who expend costly effort to win a prize under a
contest success function
\cite{tullock1980efficient,skaperdas1996contest,corchon2007contests,vojnovic2016contest}. Logit
and softmax rules are central in probabilistic choice and quantal response
equilibrium \cite{mckelvey1995qre}. Rational inattention and salience models
also treat attention or information processing as scarce and costly
\cite{sims2003rationalinattention,bordalo2012saliencerisk,bordalo2013salienceconsumer}.
However, these models are usually not applied to internal competition among
unconscious processors for a shared access channel.

A further connection is to the algorithmic game theory literature, where
equilibrium behaviour, computational tractability, and inefficiency under
strategic behaviour are central themes. In our setting, the access mechanism
induces a normal-form game in which each module's effort affects both its own
access probability and the access probabilities of the others. This places the
model close to classical questions about existence and uniqueness of equilibria,
as well as to the broader algorithmic question of whether equilibria can be
computed efficiently in structured games. Our analysis exploits this structure by giving sufficient conditions under which the
unique pure Nash equilibrium of the general access contest can be approximated
efficiently by projected pseudo-gradient dynamics.

The model also has a natural connection to asymmetric contests. In standard
contest models, asymmetries in valuations or costs can substantially affect the
allocation of effort and the probability of winning. This is important for the
present paper because capture is precisely an asymmetric phenomenon. A
lower-value representation may nevertheless dominate access if it has a
sufficient cost advantage in amplification. Contest theory provides the general
language of costly effort and probabilistic success
\cite{tullock1980efficient,skaperdas1996contest,corchon2007contests}, but the
interpretation here is different. The competing agents are not external
decision-makers competing for an external prize, but internal modules competing
for access to a shared conscious broadcast channel.

Despite these connections, several aspects of competition for conscious access
remain underdeveloped. Existing access theories rarely specify an allocation
rule whose equilibrium behaviour can be analysed. Attention and salience models
study capture, but not usually as an equilibrium consequence of costly strategic
amplification. Contest and attention-economics models provide the right
mathematical tools, but they are not typically applied to internal competition
among unconscious processors for a shared access channel. Finally, there is
limited formal understanding of what properties any single-slot access
mechanism can satisfy simultaneously, such as efficiency, robustness,
monotonicity, and differentiability.

\subsection{Contributions}

These gaps motivate the present paper. We develop a contest-theoretic model of the access bottleneck and analyse its equilibrium and mechanism-design properties.

Our main contributions are as follows.

\begin{enumerate}
    \item We formalize competition for conscious access as a contest among
    internal modules competing for a scarce broadcast slot. This provides an explicit allocation model for a mechanism that is often described
    qualitatively in workspace and attention theories.

    \item We prove existence of a pure-strategy Nash equilibrium under standard convexity and bounded-benefit assumptions, and give sufficient conditions for uniqueness using diagonal strict concavity. These results identify how stability depends on competition intensity, benefit scale, number of modules, and effort-cost curvature.

    \item We characterize when a lower-value representation can capture access because it has a sufficient amplification advantage. In the two-module setting, we first derive a sharp threshold in the quadratic-cost case, and then prove a general if-and-only-if capture criterion for strongly convex costs under a natural monotonicity condition.

    \item We prove that, under the same curvature-dominance condition that guarantees uniqueness, the unique pure Nash equilibrium of the general
    \(M\)-module access contest can be approximated efficiently by projected pseudo-gradient dynamics, with logarithmic dependence on the desired
    accuracy.

    \item We prove an impossibility theorem showing that exact winner-take-all efficiency and robustness are incompatible for single-slot mechanisms. Consequently, no anonymous differentiable access rule can simultaneously satisfy value monotonicity, robustness, and exact efficiency consistency. This provides a structural reason for studying smooth probabilistic access rules.
\end{enumerate}

\section{The Access Contest}
\label{sec:model}

We model conscious access as a single-slot allocation problem. A set of
candidate representations is produced by internal modules, but only one
representation can gain access to the global broadcast channel at a given
decision point. Modules may invest costly amplification effort to increase their
probability of access.

\subsection{Modules, scores, and access}

There are \(M\ge 2\) modules, indexed by \(i\in\{1,\dots,M\}\). Module \(i\)
has intrinsic value
$v_i\in\mathbb{R}$
which represents the task relevance or downstream value of its candidate
representation if it becomes consciously accessible.
Each module chooses an amplification effort $e_i\in[0,\infty).$

Effort may be interpreted as attention, precision, recurrent amplification, or
working-memory gating. The effective score of module \(i\) is $s_i=v_i+e_i.$

Access is allocated according to the logit contest success function:
\[
\sigma_i(e)
=
\frac{\exp(\beta s_i)}
{\sum_{j=1}^M \exp(\beta s_j)},
\; \beta>0,
\]
where \(e=(e_1,\dots,e_M)\). Thus \(\sigma_i(e)\) is the probability that
module \(i\) gains access.

The parameter \(\beta\) controls competition intensity. As \(\beta\to 0\), the
access probabilities approach the uniform distribution. As \(\beta\to\infty\),
the rule approaches winner-take-all selection among the highest-scoring
modules. Thus the model interpolates between diffuse access and sharp
competition for a single broadcast slot.

The logit access rule is used as a minimal smooth formalisation of competitive selection in global workspace and biased-competition accounts. Global workspace
theories describe conscious access as the amplification and global broadcast of selected representations, while biased-competition models describe attention as competition among representations under limited processing capacity
\cite{baars1988cognitive,dehaene2001cognitive,mashour2020gnw,desimone1995attention}.

\subsection{Costs and payoffs}

Effort is costly. Each module \(i\) has a cost function
$c_i:[0,\infty)\to\mathbb{R},$
which is assumed to be increasing and convex. If module \(i\) gains access, it
receives benefit
$B_i:=B_i(v_i)\ge 0.$
The expected payoff of module \(i\) is therefore
$u_i(e)=\sigma_i(e)B_i-c_i(e_i).$

The resulting model is a continuous-action normal-form game. It is not a finite
game, since each module chooses effort from the continuum \([0,\infty)\).
Formally,
$G=
\left(
\{1,\dots,M\},
\{[0,\infty)\}_{i=1}^M,
\{u_i\}_{i=1}^M
\right).$

A pure-strategy Nash equilibrium is an effort profile
\(e^\star\in[0,\infty)^M\) such that, for every module \(i\) and every
alternative effort \(e_i\ge 0\),
$u_i(e_i^\star,e_{-i}^\star)
\ge
u_i(e_i,e_{-i}^\star).$





The primitives of the model have direct interpretations. The values \(v_i\)
encode which candidate representations are useful in the current context. The
costs \(c_i\) encode how difficult each representation is to amplify or maintain.
The parameter \(\beta\) controls how sharply the access rule responds to score
differences.

This creates two possible sources of distortion. First, if competition is very
sharp, small score differences may have large effects on access probabilities.
Second, a lower-value representation may still gain access if it is sufficiently
cheap to amplify. The capture results in  Section~\ref{sec:two_module_capture} formalise
this second effect.
\section{Equilibrium Guarantees}
\label{sec:equilibrium}

Before studying capture and computation, we first establish that the access contest is well posed. In particular, we show that an equilibrium exists under
standard convexity assumptions, and we give interpretable sufficient conditions for uniqueness.

Throughout this section, fix a context and write
$B_i:=B_i(v_i).$
We impose the following assumptions.

\begin{assumption}[Costs]
\label{ass:costs}
For each \(i\), the cost function \(c_i:[0,\infty)\to\mathbb{R}\) is twice
continuously differentiable, nondecreasing, and strongly convex. That is, there
exists \(m_i>0\) such that
$c_i''(e)\ge m_i
\;\text{for all }e\ge 0.$
Moreover,
$c_i'(e)\to\infty
\;\text{as }e\to\infty.$
\end{assumption}

\begin{assumption}[Bounded benefits]
\label{ass:benefits}
There exists \(\bar B<\infty\) such that
\[
0\le B_i\le \bar B
\;\text{for all }i\in\{1,\dots,M\}.
\]
\end{assumption}

The following elementary bounds for the logit access rule are used throughout
the analysis.

\begin{lemma}[Derivative bounds for the logit access rule]
\label{lem:logit_bounds}
For every effort profile \(e\) and every module \(i\),
$\frac{\partial \sigma_i(e)}{\partial e_i}
=
\beta \sigma_i(e)\bigl(1-\sigma_i(e)\bigr),$
and hence
$0<
\frac{\partial \sigma_i(e)}{\partial e_i}
\le
\frac{\beta}{4}.$
Moreover,
$\frac{\partial^2 \sigma_i(e)}{\partial e_i^2}
=
\beta^2 \sigma_i(e)\bigl(1-\sigma_i(e)\bigr)
\bigl(1-2\sigma_i(e)\bigr),$
and
$\left|
\frac{\partial^2 \sigma_i(e)}{\partial e_i^2}
\right|
\le
\frac{\beta^2}{6\sqrt{3}}.$

\end{lemma}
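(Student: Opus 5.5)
Since $s_i=v_i+e_i$, we have $\partial s_j/\partial e_i=\delta_{ij}$. The plan is therefore to differentiate the softmax directly in the score variable.

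\textbf{First derivative.} Write $\sigma_i=\exp(\beta s_i)/Z$ with $Z=\sum_j \exp(\beta s_j)$. Then $\partial Z/\partial e_i=\beta\exp(\beta s_i)$. The quotient rule gives
\[
\frac{\partial\sigma_i}{\partial e_i}=\frac{\beta\exp(\beta s_i)Z-\beta\exp(2\beta s_i)}{Z^2}=\beta\sigma_i(1-\sigma_i).
\]
Because $M\ge 2$ and every exponential is strictly positive, $\sigma_i\in(0,1)$ strictly, so the derivative is strictly positive. For the upper bound, the maximum of $p(1-p)$ on $[0,1]$ is $1/4$, attained at $p=1/2$. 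This gives $\partial\sigma_i/\partial e_i\le\beta/4$.

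\textbf{Second derivative.} Apply the chain rule to $g(\sigma_i)=\beta\sigma_i(1-\sigma_i)$, reusing the first-derivative formula:
\[
\frac{\partial^2\sigma_i}{\partial e_i^2}=\beta(1-2\sigma_i)\frac{\partial\sigma_i}{\partial e_i}=\beta^2\sigma_i(1-\sigma_i)(1-2\sigma_i).
\]

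\textbf{Bound on the second derivative.} This is the only step needing any care. Maximize $|h(p)|$ for $h(p)=p(1-p)(1-2p)$ on $[0,1]$. Substitute $p=\tfrac12+t$ with $t\in[-\tfrac12,\tfrac12]$, which gives $h=-2t(\tfrac14-t^2)=2t^3-\tfrac{t}{2}$. This function is odd in $t$, so it suffices to maximize $\tfrac t2-2t^3$ for $t\ge0$. Setting the derivative to zero gives $\tfrac12-6t^2=0$, so $t^2=1/12$ and $t=1/(2\sqrt3)$. At this point,
\[
\frac t2-2t^3=t\left(\tfrac12-\tfrac{2}{12}\right)=\frac t3=\frac{1}{6\sqrt3}.
\]
At the endpoints $t=0$ and $t=\tfrac12$ we have $h=0$, so this critical point is the maximum. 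Hence $|h|\le 1/(6\sqrt3)$, and multiplying by $\beta^2$ gives the stated bound.
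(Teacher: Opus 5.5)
Your proof is correct and takes the same approach as the paper: differentiate the logit rule directly, then maximise $p(1-p)$ and $|p(1-p)(1-2p)|$ over $[0,1]$. You also fill in details the paper leaves implicit, namely strict positivity from $M\ge 2$ and the explicit critical-point computation via $p=\tfrac12+t$, and these all check out.
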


\begin{proof}
The identities follow by direct differentiation of the logit access rule. The
bound
$\sigma_i(e)(1-\sigma_i(e))\le \frac14$
gives the first inequality. For the curvature bound, one maximises
$|p(1-p)(1-2p)|$
over \(p\in[0,1]\), whose maximum is \(1/(6\sqrt{3})\).
\end{proof}

The marginal gain from effort is therefore uniformly bounded. Since marginal
costs eventually become arbitrarily large, best responses are bounded.

\begin{lemma}[Bounded best responses]
\label{lem:bounded_best_responses}
Under Assumptions \ref{ass:costs} and \ref{ass:benefits}, for each module
\(i\) there exists \(\bar e_i<\infty\) such that every best response of module
\(i\) lies in \([0,\bar e_i]\). One admissible choice is any \(\bar e_i\)
satisfying
$c_i'(\bar e_i)\ge \frac{\beta \bar B}{4}.$
\end{lemma}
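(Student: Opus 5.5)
We argue by comparing marginal benefit with marginal cost along module \(i\)'s own effort coordinate, holding \(e_{-i}\) fixed. For fixed \(e_{-i}\), define \(\phi(e_i)=u_i(e_i,e_{-i})=\sigma_i(e_i,e_{-i})B_i-c_i(e_i)\) on \([0,\infty)\). By Lemma~\ref{lem:logit_bounds} and Assumption~\ref{ass:benefits}, the marginal benefit satisfies
\[
0\le B_i\frac{\partial \sigma_i}{\partial e_i}(e)\le \frac{\beta \bar B}{4}
\]
uniformly in \(e\). Hence \(\phi'(e_i)\le \frac{\beta\bar B}{4}-c_i'(e_i)\) for every \(e_i\ge 0\).

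The first step is to show that a threshold \(\bar e_i\) with \(c_i'(\bar e_i)\ge \beta\bar B/4\) exists. This follows directly from Assumption~\ref{ass:costs}, since \(c_i'(e)\to\infty\). Strong convexity gives \(c_i''\ge m_i>0\), so \(c_i'\) is strictly increasing. Consequently, for every \(e_i>\bar e_i\) we have \(c_i'(e_i)>c_i'(\bar e_i)\ge \beta\bar B/4\), and therefore \(\phi'(e_i)<0\) on \((\bar e_i,\infty)\).

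The second step converts this into the claim about best responses. Suppose \(e_i>\bar e_i\). By the mean value theorem, \(\phi(\bar e_i)-\phi(e_i)=-\phi'(\xi)(e_i-\bar e_i)\) for some \(\xi\in(\bar e_i,e_i)\), and this quantity is strictly positive. So \(\bar e_i\) gives strictly higher payoff than \(e_i\), and \(e_i\) cannot be a best response. Every best response therefore lies in \([0,\bar e_i]\).

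Nonemptiness of the best-response set is not part of the claim, but it follows at no extra cost. The function \(\phi\) is continuous on the compact interval \([0,\bar e_i]\), so it attains a maximum there, and by the previous step that maximum is global.

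The only delicate point is the strictness needed to rule out best responses beyond \(\bar e_i\). This is supplied by the strict monotonicity of \(c_i'\) that comes from strong convexity. Without it, \(\phi\) could be flat beyond \(\bar e_i\), and best responses could extend past \(\bar e_i\) while still lying below any larger bound. Everything else is routine.
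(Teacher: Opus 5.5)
Your proof is correct and follows the same route as the paper: you bound the marginal benefit $B_i\,\partial\sigma_i/\partial e_i$ by $\beta\bar B/4$ via Lemma~\ref{lem:logit_bounds}, so the own-effort derivative of $u_i$ is strictly negative beyond $\bar e_i$, which rules out best responses above $\bar e_i$. You also spell out steps the paper leaves implicit: existence of $\bar e_i$ from $c_i'\to\infty$, strict monotonicity of $c_i'$ to get a strict sign, and the mean value theorem to turn a negative derivative into a strict payoff comparison. These are welcome tightenings of the same argument rather than a different one.
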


\begin{proof}
Fix \(e_{-i}\). For module \(i\),
$\frac{\partial u_i(e_i,e_{-i})}{\partial e_i}
=
B_i\frac{\partial \sigma_i(e)}{\partial e_i}
-
c_i'(e_i).$
By Lemma~\ref{lem:logit_bounds},
$B_i\frac{\partial \sigma_i(e)}{\partial e_i}
\le
\frac{\beta B_i}{4}
\le
\frac{\beta \bar B}{4}.$
Hence, whenever \(e_i>\bar e_i\),
$\frac{\partial u_i(e_i,e_{-i})}{\partial e_i}<0$.
Thus no best response can lie above \(\bar e_i\).
\end{proof}

Define the compact strategy set
$E:=\prod_{i=1}^M [0,\bar e_i],$
where each \(\bar e_i\) is chosen as in Lemma~\ref{lem:bounded_best_responses}.

\begin{theorem}[Existence of equilibrium]
\label{thm:existence}
Suppose Assumptions \ref{ass:costs} and \ref{ass:benefits} hold. Assume also
that, for every \(i\),
$m_i>
\frac{\beta^2 B_i}{6\sqrt{3}}.$
Then the access contest admits a pure-strategy Nash equilibrium
$e^\star\in E.$
\end{theorem}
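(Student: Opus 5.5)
The plan is to apply the Debreu--Glicksberg--Fan existence theorem for games with compact convex strategy sets and quasi-concave payoffs to the game restricted to the compact box \(E\). Afterwards, we lift the equilibrium back to the original game on \([0,\infty)^M\) using Lemma~\ref{lem:bounded_best_responses}.

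\emph{Setup of the restricted game.} Each \([0,\bar e_i]\) is a nonempty compact convex subset of \(\mathbb{R}\), so \(E\) is nonempty, compact and convex. Each payoff \(u_i(e)=\sigma_i(e)B_i-c_i(e_i)\) is continuous on \(E\). This holds because the logit map \(\sigma_i\) is smooth in \(e\) and \(c_i\) is \(C^2\) by Assumption~\ref{ass:costs}.

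\emph{Own-concavity.} This is the only step where the hypothesis \(m_i>\beta^2B_i/(6\sqrt3)\) enters, and it is the main thing to verify. Fix \(e_{-i}\). By Lemma~\ref{lem:logit_bounds},
\[
\frac{\partial^2 u_i}{\partial e_i^2}
= B_i\frac{\partial^2\sigma_i}{\partial e_i^2}-c_i''(e_i)
\le \frac{\beta^2 B_i}{6\sqrt3}-m_i<0 .
\]
Hence \(e_i\mapsto u_i(e_i,e_{-i})\) is strictly concave on \([0,\bar e_i]\), and in particular quasi-concave. The Debreu--Glicksberg--Fan theorem then yields a pure Nash equilibrium \(e^\star\in E\) of the restricted game. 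Equivalently, one can note that the best-response map on \(E\) is single-valued by strict concavity and continuous by Berge's maximum theorem, and then apply Brouwer's theorem.

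\emph{Lifting to the unrestricted game.} We must show that \(e^\star\) is also an equilibrium when deviations over all of \([0,\infty)\) are allowed. Fix \(i\). By the proof of Lemma~\ref{lem:bounded_best_responses}, \(\partial u_i/\partial e_i<0\) whenever \(e_i>\bar e_i\), so \(u_i(\cdot,e^\star_{-i})\) is strictly decreasing on \([\bar e_i,\infty)\). (If the chosen \(\bar e_i\) only gives \(c_i'(\bar e_i)\ge \beta\bar B/4\) with equality, the derivative is still \(<0\) strictly beyond \(\bar e_i\), because \(c_i'\) is strictly increasing by strong convexity.) Therefore any deviation \(e_i>\bar e_i\) does strictly worse than \(\bar e_i\). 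Since \(\bar e_i\) lies in the restricted set, it does no better than \(e_i^\star\). So \(e_i^\star\) is a global best response over \([0,\infty)\), and \(e^\star\) is a pure-strategy Nash equilibrium of the access contest lying in \(E\).

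\emph{Main obstacle.} The only delicate point is the concavity step, specifically that the curvature bound from Lemma~\ref{lem:logit_bounds} holds uniformly in \(e_{-i}\). The lifting step is routine but must be stated explicitly, since the fixed-point theorem only delivers equilibrium relative to \(E\).
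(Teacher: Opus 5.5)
Your proposal is correct and follows essentially the paper's route. You derive own-payoff strict concavity from Lemma~\ref{lem:logit_bounds} and the hypothesis on \(m_i\), then apply a fixed-point argument on \(E\) (the paper uses exactly the Berge--Brouwer version you give as an alternative to Debreu--Glicksberg--Fan), and your explicit lifting step just spells out what the paper compresses into ``all best responses lie in \(E\)'' via Lemma~\ref{lem:bounded_best_responses}.
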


\begin{proof}
By Lemma~\ref{lem:bounded_best_responses}, all best responses lie in the
compact convex set \(E\). For fixed \(e_{-i}\), the payoff of player \(i\) as a
function of \(e_i\) satisfies
\[
\frac{\partial^2 u_i(e_i,e_{-i})}{\partial e_i^2}
=
B_i\frac{\partial^2\sigma_i(e)}{\partial e_i^2}
-
c_i''(e_i).
\]
Using Lemma~\ref{lem:logit_bounds} and the assumed lower bound on \(c_i''\),
we obtain
\[
\frac{\partial^2 u_i(e_i,e_{-i})}{\partial e_i^2}
\le
\frac{\beta^2 B_i}{6\sqrt{3}}-m_i
<0.
\]
Thus each player's payoff is strictly concave in its own effort. Therefore each
best response is single-valued. Since payoffs are continuous and the feasible
sets are compact, Berge's maximum theorem implies that the best-response map
is continuous. Brouwer's fixed-point theorem then gives a fixed point of the
best-response map in \(E\), which is a pure-strategy Nash equilibrium.
\end{proof}

Existence ensures that the model has stable predictions. We next give a
standard uniqueness condition based on Rosen's diagonal strict concavity\cite{rosen1965concave}.

\begin{definition}[Pseudo-gradient]
\label{def:pseudogradient}
The pseudo-gradient of the access contest is
\[
g(e):=
\left(
\frac{\partial u_1(e)}{\partial e_1},
\dots,
\frac{\partial u_M(e)}{\partial e_M}
\right).
\]
\end{definition}

\begin{theorem}[Uniqueness via diagonal strict concavity]
\label{thm:rosen_uniqueness}
Assume the utilities are concave in own effort on \(E\). If the pseudo-gradient
\(g\) satisfies Rosen's diagonal strict concavity condition on \(E\), then the
Nash equilibrium in \(E\) is unique.
\end{theorem}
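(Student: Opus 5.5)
We follow Rosen's classical argument and argue by contradiction via the variational characterisation of equilibria on the compact convex set \(E\). Rosen's diagonal strict concavity condition (with all weights equal to one, since the pseudo-gradient is unweighted in Definition~\ref{def:pseudogradient}) states that for all distinct \(e,e'\in E\),
\[
(e'-e)^{\top} g(e) + (e-e')^{\top} g(e') > 0 .
\]
Equivalently, \(-g\) is strictly monotone on \(E\).

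\textbf{Step 1: first-order characterisation of equilibria.} Let \(e^\star\in E\) be a Nash equilibrium. For each \(i\), \(e_i^\star\) maximises the concave \(C^1\) function \(e_i\mapsto u_i(e_i,e^\star_{-i})\) over \([0,\bar e_i]\). The first-order condition on an interval then gives
\[
g_i(e^\star)(e_i-e_i^\star)\le 0 \quad\text{for all } e_i\in[0,\bar e_i].
\]
Concretely, \(g_i(e^\star)\le 0\) if \(e_i^\star=0\), \(g_i(e^\star)=0\) in the interior, and \(g_i(e^\star)\ge 0\) if \(e_i^\star=\bar e_i\). Summing over \(i\) yields the variational inequality
\[
(e-e^\star)^{\top} g(e^\star)\le 0 \quad\text{for all } e\in E.
\]
Only necessity of the first-order condition is needed here, which holds for any differentiable maximiser over an interval; concavity is recorded for completeness and matters only for the converse. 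Here \(u_i\) is \(C^1\) because \(\sigma_i\) is smooth and \(c_i\in C^2\) by Assumption~\ref{ass:costs}.

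\textbf{Step 2: contradiction.} Suppose \(e^\star\ne e^{\star\star}\) are two equilibria in \(E\). Apply Step 1 at \(e^\star\) with test point \(e=e^{\star\star}\), and at \(e^{\star\star}\) with test point \(e=e^\star\):
\[
(e^{\star\star}-e^\star)^{\top} g(e^\star)\le 0, \qquad (e^\star-e^{\star\star})^{\top} g(e^{\star\star})\le 0 .
\]
Adding these gives \((e^{\star\star}-e^\star)^{\top} g(e^\star)+(e^\star-e^{\star\star})^{\top} g(e^{\star\star})\le 0\). Diagonal strict concavity applied to the pair \((e^\star,e^{\star\star})\) says this sum is strictly positive, a contradiction. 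Hence the equilibrium in \(E\) is unique.

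\textbf{Expected obstacle.} The only delicate point is the boundary bookkeeping in Step 1. The upper endpoint \(\bar e_i\) is an artificial truncation, so an equilibrium of the restricted game on \(E\) might sit at \(e_i=\bar e_i\) with positive marginal payoff. This does not affect the argument, because the variational inequality handles both endpoints uniformly and uniqueness is claimed only within \(E\). In fact Lemma~\ref{lem:bounded_best_responses} shows \(g_i<0\) beyond \(\bar e_i\), so restricted and unrestricted equilibria coincide. Nothing else is required: existence is not needed for uniqueness, and no specific structure of the logit rule enters beyond differentiability.
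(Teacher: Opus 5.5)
Your proof is correct and follows the paper's route. The paper simply invokes Rosen's uniqueness theorem (two distinct equilibria would contradict diagonal strict concavity), and you write that argument out explicitly via the first-order variational inequality on the box \(E\), additionally appealing to Theorem~\ref{thm:existence} for existence where you prove only the ``at most one'' part.
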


\begin{proof}
This is an application of Rosen's uniqueness theorem for concave \(N\)-person
games. Under diagonal strict concavity, two distinct equilibria would violate
the strict monotonicity of the weighted pseudo-gradient. Hence at most one
equilibrium can exist in \(E\). Together with Theorem~\ref{thm:existence}, this
gives uniqueness.
\end{proof}

For completeness, we state a simple sufficient condition that implies diagonal
strict concavity in the present model.

\begin{proposition}[A simple sufficient condition for uniqueness]
\label{prop:simple_uniqueness}
Let
$m:=\min_i \inf_{e\ge 0} c_i''(e),
\;
\bar B:=\max_i B_i.$
If
\[
m>
\beta^2 \bar B
\left(
\frac{M-1}{4}
+
\frac{1}{6\sqrt{3}}
\right),
\]
then the Nash equilibrium in \(E\) is unique.
\end{proposition}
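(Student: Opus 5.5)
The plan is to verify Rosen's diagonal strict concavity (with unit weights) by showing that the symmetric part of the Jacobian of the pseudo-gradient \(g\) is negative definite on \(E\). We then invoke Theorem~\ref{thm:rosen_uniqueness}, together with existence from Theorem~\ref{thm:existence}.

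First, we check that the hypothesis of Theorem~\ref{thm:existence} holds. Since \(m_i\ge m\), the assumed inequality gives \(m_i> \beta^2 B_i/(6\sqrt3)\). Hence each utility is strictly concave in own effort, as in the proof of that theorem, and an equilibrium exists in \(E\).

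Next we compute the Jacobian \(J(e)\) of \(g\). Since \(g_i(e)=B_i\,\partial\sigma_i/\partial e_i - c_i'(e_i)\), the diagonal entries are
\[
J_{ii}=B_i\beta^2\sigma_i(1-\sigma_i)(1-2\sigma_i)-c_i''(e_i)\le \frac{\beta^2\bar B}{6\sqrt3}-m,
\]
by Lemma~\ref{lem:logit_bounds}. For \(j\ne i\), we differentiate \(\beta\sigma_i(1-\sigma_i)\) in \(e_j\), using \(\partial\sigma_i/\partial e_j=-\beta\sigma_i\sigma_j\). This gives
\[
J_{ij}=-B_i\beta^2\sigma_i\sigma_j(1-2\sigma_i).
\]
We bound it by \(|J_{ij}|\le \beta^2\bar B\,\sigma_i\sigma_j|1-2\sigma_i|\le \beta^2\bar B/4\). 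Here the last step uses \(\sigma_i\sigma_j\le \sigma_i(1-\sigma_i)\le 1/4\) (because \(\sigma_j\le 1-\sigma_i\)) together with \(|1-2\sigma_i|\le 1\).

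The main step is negative definiteness of \(H=\tfrac12(J+J^\top)\). Its off-diagonal entries satisfy \(|H_{ij}|\le \beta^2\bar B/4\), so each row has off-diagonal absolute sum at most \((M-1)\beta^2\bar B/4\). By Gershgorin's theorem, every eigenvalue of \(H\) is then at most
\[
\frac{\beta^2\bar B}{6\sqrt3}-m+\frac{(M-1)\beta^2\bar B}{4}<0,
\]
and this is strict under the stated condition, uniformly on \(E\).

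Finally, we derive diagonal strict concavity. For distinct \(e,e'\in E\), write \(d=e'-e\). By the mean value theorem along the segment, which lies in the convex set \(E\),
\[
(e'-e)^\top\bigl(g(e')-g(e)\bigr)=\int_0^1 d^\top H(e+td)\,d\,dt<0.
\]
This is exactly Rosen's condition with weights \(r=(1,\dots,1)\), and Theorem~\ref{thm:rosen_uniqueness} gives uniqueness.

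The only delicate point is the off-diagonal bound. A cruder estimate would lose a constant and fail to match the \((M-1)/4\) coefficient, so the careful use of \(\sigma_i\sigma_j\le 1/4\) is what makes the stated constant work.
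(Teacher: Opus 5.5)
Your proposal is correct and follows essentially the same route as the paper. Both arguments bound the diagonal of the pseudo-gradient Jacobian via Lemma~\ref{lem:logit_bounds} and bound the off-diagonal entries by $\beta^2\bar B/4$, then apply Gershgorin to the symmetric part to get negative definiteness and hence Rosen's diagonal strict concavity; you additionally spell out the mixed-partial computation, the integral step from negative definiteness to Rosen's condition, and the check that existence holds, all of which the paper leaves implicit.
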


\begin{proof}
Let \(J(e)=Dg(e)\) be the Jacobian of the pseudo-gradient and let
\[
S(e):=\frac{J(e)+J(e)^\top}{2}
\]
be its symmetric part. It is enough to show that \(S(e)\) is negative definite
for every \(e\in E\).

For the diagonal entries,
\[
J_{ii}(e)
=
B_i\frac{\partial^2\sigma_i(e)}{\partial e_i^2}
-
c_i''(e_i).
\]
By Lemma~\ref{lem:logit_bounds},
\[
J_{ii}(e)
\le
\frac{\beta^2\bar B}{6\sqrt{3}}-m.
\]
For \(i\neq j\), the mixed derivatives of the logit rule satisfy
\[
\left|
\frac{\partial^2\sigma_i(e)}{\partial e_i\partial e_j}
\right|
\le
\frac{\beta^2}{4},
\]
and therefore
\[
|J_{ij}(e)|\le \frac{\beta^2\bar B}{4}.
\]
The same bound applies to the off-diagonal entries of \(S(e)\). Hence, for
each row,
\[
S_{ii}(e)+\sum_{j\neq i}|S_{ij}(e)|
\le
\left(
\frac{\beta^2\bar B}{6\sqrt{3}}-m
\right)
+
(M-1)\frac{\beta^2\bar B}{4}.
\]
The assumed condition makes the right-hand side strictly negative. By
Gershgorin's circle theorem, all eigenvalues of \(S(e)\) are strictly negative,
so \(S(e)\) is negative definite. Therefore the pseudo-gradient is strictly
monotone decreasing on \(E\), which implies Rosen's diagonal strict concavity
condition and hence uniqueness.
\end{proof}

\begin{remark}[Interpretation]
Proposition~\ref{prop:simple_uniqueness} gives a transparent stability
condition. Larger competition intensity \(\beta\), larger benefit scale
\(\bar B\), or more modules \(M\) increase strategic coupling between players.
In contrast, stronger effort-cost curvature \(m\) stabilises the game. Thus the
equilibrium is unique when marginal costs grow sufficiently quickly relative to
the strength of competition for access.
\end{remark}

\subsection{Efficient computation of the pure Nash equilibrium}
\label{subsec:efficient_pne}
We now show that the same curvature-dominance condition that yields uniqueness also gives an efficient method for computing the equilibrium. The result applies to the general \(M\)-module access contest.
Recall the pseudo-gradient
g(e)=
$\left(
\frac{\partial u_1(e)}{\partial e_1},
\dots,
\frac{\partial u_M(e)}{\partial e_M}
\right),$
and define
$F(e):=-g(e).$
Let
$E=\prod_{i=1}^M[0,\bar e_i]$
be the compact box containing all best responses. Under the hypotheses of Proposition~\ref{prop:simple_uniqueness}, define the curvature margin
\[
\alpha
:=
m-
\beta^2\bar B
\left(
\frac{M-1}{4}
+
\frac{1}{6\sqrt{3}}
\right)
>0.
\]
Let \(L\) be any Lipschitz constant of \(F\) on \(E\), and let
$D:=\sup_{e,\tilde e\in E}\|e-\tilde e\|$
be the diameter of \(E\). Consider the projected pseudo-gradient iteration
$e^{t+1}
=
\Pi_E\bigl(e^t-\eta F(e^t)\bigr)
=
\Pi_E\bigl(e^t+\eta g(e^t)\bigr),
\;
\eta=\frac{\alpha}{L^2}.$

Such an \(L\) exists because \(F\) is continuously differentiable and \(E\) is compact.

\begin{theorem}[Efficient computation of the unique pure Nash equilibrium]
\label{thm:efficient_pne_computation}
Under the hypotheses of Proposition~\ref{prop:simple_uniqueness}, the access
contest has a unique pure Nash equilibrium \(e^\star\in E\). Moreover, the
projected pseudo-gradient iteration satisfies
$\|e^T-e^\star\|\le \varepsilon$
after
$T
=
O\!\left(
\frac{L^2}{\alpha^2}
\log\frac{D}{\varepsilon}
\right)$
iterations. If each \(c_i'\) can be evaluated in constant time, the total
runtime complexity is
$O\!\left(
M\frac{L^2}{\alpha^2}
\log\frac{D}{\varepsilon}
\right).$
\end{theorem}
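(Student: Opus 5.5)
We will treat the equilibrium problem as a variational inequality on the box \(E\) and show that the projected iteration is a contraction. Uniqueness of \(e^\star\in E\) is already given by Theorem~\ref{thm:existence} together with Proposition~\ref{prop:simple_uniqueness}, so only the convergence rate and the runtime remain to be proved.

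The first step is to characterise \(e^\star\) as a fixed point of the projected map. Each \(u_i\) is concave in \(e_i\), by the curvature estimate in the proof of Theorem~\ref{thm:existence}; the hypothesis of Proposition~\ref{prop:simple_uniqueness} implies that of Theorem~\ref{thm:existence}. Hence \(e^\star\) is a Nash equilibrium in \(E\) if and only if \(\langle F(e^\star), e-e^\star\rangle\ge 0\) for all \(e\in E\). Since \(E\) is a product of intervals, this is equivalent to \(e^\star=\Pi_E(e^\star-\eta F(e^\star))\) for every \(\eta>0\). I will also note that an equilibrium of the game on \(E\) is an equilibrium on \([0,\infty)^M\), because best responses lie in \(E\) by Lemma~\ref{lem:bounded_best_responses}.

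The second step, which is the key estimate, is strong monotonicity of \(F\) with modulus \(\alpha\). The Gershgorin argument in the proof of Proposition~\ref{prop:simple_uniqueness} actually shows that every eigenvalue of the symmetric part \(S(e)\) of \(Dg(e)\) is at most \(-\alpha\). Thus \(-S(e)\succeq \alpha I\) on \(E\). Since \(E\) is convex, the mean value form
\[
F(e)-F(\tilde e)=\int_0^1 DF(\tilde e+t(e-\tilde e))\,(e-\tilde e)\,dt
\]
gives \(\langle F(e)-F(\tilde e),e-\tilde e\rangle\ge \alpha\|e-\tilde e\|^2\). The main obstacle I anticipate is rigor here. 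The Gershgorin bound must be used with the margin \(\alpha\) rather than mere strict negativity. The off-diagonal entries of \(S\) are averages \((J_{ij}+J_{ji})/2\), and each is bounded by \(\beta^2\bar B/4\), so the row-sum bound carries over. I would verify that the mixed logit derivative \(\partial^2\sigma_i/\partial e_i\partial e_j=-\beta^2\sigma_i\sigma_j(1-2\sigma_i)\) indeed satisfies this bound, using \(\sigma_i\sigma_j\le 1/4\).

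The third step is the contraction estimate. The projection \(\Pi_E\) is nonexpansive. Combining the fixed-point identity with strong monotonicity and the Lipschitz bound \(\|F(e)-F(e^\star)\|\le L\|e-e^\star\|\) gives
\[
\|e^{t+1}-e^\star\|^2\le \|e^t-e^\star-\eta(F(e^t)-F(e^\star))\|^2\le (1-2\eta\alpha+\eta^2L^2)\|e^t-e^\star\|^2.
\]
With \(\eta=\alpha/L^2\) the factor equals \(1-\alpha^2/L^2\), which lies in \([0,1)\) because \(\alpha\le L\) follows from strong monotonicity and Cauchy--Schwarz. Iterating from any \(e^0\in E\), and using \(\|e^0-e^\star\|\le D\), gives
\[
\|e^T-e^\star\|\le (1-\alpha^2/L^2)^{T/2}D\le e^{-T\alpha^2/(2L^2)}D.
\]
This is at most \(\varepsilon\) once \(T\ge (2L^2/\alpha^2)\log(D/\varepsilon)\).

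For the runtime, each iteration computes the softmax vector \(\sigma(e^t)\) in \(O(M)\) time, which needs one normalising sum. It then forms \(g_i=\beta B_i\sigma_i(1-\sigma_i)-c_i'(e_i)\) for all \(i\) in \(O(M)\) time, and projects coordinatewise onto \([0,\bar e_i]\) in \(O(M)\) time. Multiplying this per-iteration cost by the bound on \(T\) yields the stated total \(O\bigl(M\tfrac{L^2}{\alpha^2}\log\tfrac{D}{\varepsilon}\bigr)\).
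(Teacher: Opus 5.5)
Your proposal is correct and follows essentially the same route as the paper. Both arguments use the variational-inequality/fixed-point characterisation of \(e^\star\), derive strong monotonicity of \(F\) with modulus \(\alpha\) from the Gershgorin bound, obtain the factor \(1-\alpha^2/L^2\) at \(\eta=\alpha/L^2\) from the nonexpansive projection, and count \(O(M)\) work per iteration. The only differences are that you contract directly toward \(e^\star\) rather than showing \(T\) is a global contraction, and that you add some welcome care the paper omits: the mean-value integral that turns the Jacobian bound into strong monotonicity, the check \(\alpha\le L\), and the remark that equilibria on \(E\) are equilibria on \([0,\infty)^M\).
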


\begin{proof}
By Proposition~\ref{prop:simple_uniqueness}, the symmetric part of the
Jacobian of the pseudo-gradient is uniformly negative definite on \(E\). In
particular,
\[
\frac{Dg(e)+Dg(e)^\top}{2}
\preceq
-\alpha I
\;
\text{for all }e\in E.
\]
Since \(F=-g\), this implies
\[
\frac{DF(e)+DF(e)^\top}{2}
\succeq
\alpha I,
\]
and therefore \(F\) is strongly monotone:
$(F(e)-F(\tilde e))^\top(e-\tilde e)
\ge
\alpha\|e-\tilde e\|^2
\;
\text{for all }e,\tilde e\in E.$
Because each payoff is concave in the player's own effort on \(E\), a profile
\(e^\star\in E\) is a Nash equilibrium if and only if it solves the variational
inequality
$(e-e^\star)^\top F(e^\star)\ge 0
\;
\text{for all }e\in E.$
Equivalently, \(e^\star\) is a fixed point of the projected map
$T(e):=\Pi_E(e-\eta F(e)).$
We now show that \(T\) is a contraction. Since projection onto a closed convex
set is nonexpansive,

$\begin{aligned}
\|T(e)-T(\tilde e)\|^2
&\le
\|(e-\eta F(e))-(\tilde e-\eta F(\tilde e))\|^2 \\
&=
\|e-\tilde e\|^2
-2\eta(F(e)-F(\tilde e))^\top(e-\tilde e)
+\eta^2\|F(e)-F(\tilde e)\|^2 .
\end{aligned}$

Using strong monotonicity and \(L\)-Lipschitz continuity,
$\|T(e)-T(\tilde e)\|^2
\le
\left(1-2\eta\alpha+\eta^2L^2\right)\|e-\tilde e\|^2.$

With \(\eta=\alpha/L^2\),
$\|T(e)-T(\tilde e)\|^2
\le
\left(1-\frac{\alpha^2}{L^2}\right)\|e-\tilde e\|^2.$
Thus \(T\) is a contraction. Since its unique fixed point is \(e^\star\),
\[
\|e^t-e^\star\|
\le
\left(1-\frac{\alpha^2}{L^2}\right)^{t/2}
\|e^0-e^\star\|
\le
\left(1-\frac{\alpha^2}{L^2}\right)^{t/2}D.
\]
Using \(1-x\le e^{-x}\), it is enough to take
\[
t
=
O\!\left(
\frac{L^2}{\alpha^2}
\log\left(\frac{D}{\varepsilon}\right)
\right)
\]
to ensure \(\|e^t-e^\star\|\le\varepsilon\).

Finally, each iteration computes the logit probabilities, the pseudo-gradient,
and the projection onto \(E\). These operations take \(O(M)\) time if
each \(c_i'\) is evaluated in constant time. Multiplying by the number of iterations gives the stated total cost.
\end{proof}

\section{Capture in the Two-Module Case}
\label{sec:two_module_capture}

We now specialise in the two-module case. This setting is simple enough to provide sharp characterisations, but still captures the key phenomenon of interest, that is a lower-value representation may gain access because it is easier to amplify.

Let \(M=2\), and suppose
$v_1>v_2$.

Thus module \(1\) has the intrinsically higher-value representation, while
module \(2\) is lower-value. Define the intrinsic value gap
$\delta:=v_1-v_2>0.$

Write
$\Delta:=(v_1+e_1)-(v_2+e_2)$
for the score gap between the two modules. Since
\[
\sigma_1(e)=\frac{1}{1+\exp(-\beta\Delta)},
\;
\sigma_2(e)=1-\sigma_1(e),
\]
module \(2\) receives more than half of the access probability if and only if
$\sigma_2(e)>\frac12
\;\Longleftrightarrow\;
\sigma_1(e)<\frac12
\;\Longleftrightarrow\;
\Delta<0.$
We use the following definition:

\begin{definition}[Capturing interior equilibrium]
\label{def:capturing_equilibrium}
In the two-module setting with \(v_1>v_2\), an interior Nash equilibrium
\(e^\star=(e_1^\star,e_2^\star)\) is called a \emph{capturing interior
equilibrium} if the lower-value module obtains more than half of the access
probability: $\sigma_2(e^\star)>\frac12.$
Equivalently, since
$\sigma_1(e)=\frac{1}{1+\exp(-\beta\Delta)},
and \;
\Delta=(v_1+e_1)-(v_2+e_2),$
capture occurs if and only if
$\Delta^\star<0.$
\end{definition}

Throughout this section, assume that for each \(i\in\{1,2\}\), the cost
function \(c_i\) is \(C^2\), strictly convex, and satisfies
\[
c_i(0)=0,\; c_i'(0)=0,\; c_i''(e)\ge m_i>0
\;\text{for all }e\ge 0.
\]
We also assume that each player's payoff is strictly concave in its own effort,
for example
\[
c_i''(e)>\frac{\beta^2 B_i}{6\sqrt{3}}
\;
\text{for all }e\ge 0,\ i\in\{1,2\}.
\]
This ensures that the interior first-order conditions are sufficient for
best-response optimality.

\subsection{A general if-and-only-if capture criterion}

The first result gives an exact capture criterion for general strongly convex
costs. It is useful to define the inverse marginal-cost response of each module.

\begin{definition}[Inverse marginal-cost responses and amplification advantage]
\label{def:amplification_advantage}
For \(q\in[0,1/4]\), define
$\psi_i(q):=(c_i')^{-1}(\beta B_i q),
\; i\in\{1,2\},$
and define the amplification-advantage function
$A(q):=\psi_2(q)-\psi_1(q).$
\end{definition}

The quantity \(A(q)\) measures how much more effort module \(2\) supplies than
module \(1\) when both face the same logit sensitivity term \(q\). Thus \(A(q)\)
captures the cost-adjusted amplification advantage of the lower-value module.

\begin{theorem}[General if-and-only-if capture criterion]
\label{thm:general_iff_capture}
Suppose the function \(A\) from Definition~\ref{def:amplification_advantage}
is nondecreasing on \([0,1/4]\). Then a capturing interior equilibrium exists
if and only if
$A(1/4)>\delta.$
At the boundary case \(A(1/4)=\delta\), there exists an interior equilibrium
with balanced access,
$\sigma_1=\sigma_2=\frac12.$
If \(A(1/4)<\delta\), no capturing interior equilibrium exists. Moreover,
whenever \(A(1/4)>\delta\), the capturing interior equilibrium is unique.
\end{theorem}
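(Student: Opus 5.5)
The approach is to reduce the two-module equilibrium conditions to a single scalar fixed-point equation in the score gap \(\Delta\), and then use monotonicity of \(A\) together with the intermediate value theorem.

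\textbf{Step 1: interior equilibria as solutions of a scalar equation.} For \(M=2\), Lemma~\ref{lem:logit_bounds} gives \(\partial\sigma_i/\partial e_i=\beta\sigma_1\sigma_2\) for both \(i\). So at an interior profile the first-order conditions read \(c_i'(e_i)=\beta B_i q\), where \(q:=\sigma_1\sigma_2=h(\Delta)\) and
\[
h(\Delta):=\frac{e^{-\beta\Delta}}{(1+e^{-\beta\Delta})^2}.
\]
The function \(h\) is even, strictly increasing on \((-\infty,0]\), attains its maximum \(h(0)=1/4\), and satisfies \(h(\Delta)\to 0\) as \(\Delta\to-\infty\).

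Strong convexity, with \(c_i''\ge m_i>0\) and \(c_i'(0)=0\), makes \(c_i'\) a continuous strictly increasing bijection of \([0,\infty)\) onto \([0,\infty)\). Hence \(\psi_i\) is well defined and continuous, and \(\psi_i(0)=0\), so \(A(0)=0\). I assume \(B_i>0\); otherwise no interior equilibrium exists. Then \(q>0\) gives \(e_i=\psi_i(q)>0\), so the profile is automatically interior.

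Because payoffs are strictly concave in own effort under the standing assumption, the first-order conditions are also sufficient. Therefore interior equilibria correspond bijectively to solutions \(\Delta\) of
\[
\Delta=\delta+\psi_1(h(\Delta))-\psi_2(h(\Delta))=\delta-A(h(\Delta)),
\]
with \(e_i^\star=\psi_i(h(\Delta))\). Equivalently, they are the zeros of \(\Phi(\Delta):=\Delta-\delta+A(h(\Delta))\). By Definition~\ref{def:capturing_equilibrium}, capture corresponds exactly to a zero with \(\Delta<0\).

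\textbf{Step 2: the three cases.}
\begin{itemize}
\item \emph{Necessity.} Suppose \(\Phi(\Delta)=0\) with \(\Delta<0\). Then \(A(h(\Delta))=\delta-\Delta>\delta\). Since \(h(\Delta)\le 1/4\) and \(A\) is nondecreasing, \(A(1/4)\ge A(h(\Delta))>\delta\). Contrapositively, if \(A(1/4)\le\delta\) there is no capturing interior equilibrium, which covers the case \(A(1/4)<\delta\).
\item \emph{Boundary case.} If \(A(1/4)=\delta\), then \(\Phi(0)=0\). This gives an interior equilibrium with \(\Delta=0\), that is, \(\sigma_1=\sigma_2=1/2\).
\item \emph{Sufficiency.} If \(A(1/4)>\delta\), then \(\Phi(0)=A(1/4)-\delta>0\). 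As \(\Delta\to-\infty\), \(A(h(\Delta))\to A(0)=0\) by continuity, so \(\Phi(\Delta)\to-\infty\). Since \(\Phi\) is continuous, the intermediate value theorem yields a zero in \((-\infty,0)\).
\end{itemize}

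\textbf{Step 3: uniqueness.} On \((-\infty,0)\), \(h\) is increasing and \(A\) is nondecreasing, so \(A\circ h\) is nondecreasing. Adding the strictly increasing term \(\Delta\) makes \(\Phi\) strictly increasing there. Hence \(\Phi\) has at most one zero in \((-\infty,0)\). By the bijection of Step 1, the capturing interior equilibrium is unique.

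\textbf{Main obstacle.} The delicate part is Step 1: justifying the exact correspondence between interior equilibria and zeros of \(\Phi\). This requires well-definedness and continuity of \(\psi_i\) on \([0,1/4]\), positivity of the efforts so the profile is genuinely interior, and sufficiency of the first-order conditions via own-effort strict concavity. Once this reduction is in place, the rest is one-dimensional monotonicity plus the intermediate value theorem.
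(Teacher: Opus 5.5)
Your proof is correct and follows essentially the same route as the paper. The paper likewise reduces interior equilibria to zeros of $H(\Delta)=\Delta-\delta+A\bigl(1/(4\cosh^2(\beta\Delta/2))\bigr)$, and uses that $H$ is continuous and strictly increasing on $(-\infty,0]$, tends to $-\infty$ as $\Delta\to-\infty$, and satisfies $H(0)=A(1/4)-\delta$. Your direct necessity argument, your continuity and surjectivity checks on $c_i'$, and your $B_i>0$ caveat are small refinements that make explicit what the paper leaves implicit, including that no capture occurs when $A(1/4)=\delta$.
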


\begin{proof}
Write
\[
p:=\sigma_1(e)
=
\frac{1}{1+\exp(-\beta\Delta)},
\;
\sigma_2(e)=1-p,
\]
and define $q:=p(1-p)$.
As noted above, capture by module \(2\) is equivalent to
$p<\frac12
\;\Longleftrightarrow\;
\Delta<0.$
We first characterise interior equilibria. The payoffs are
$u_1(e_1,e_2)=pB_1-c_1(e_1),
\;
u_2(e_1,e_2)=(1-p)B_2-c_2(e_2).$
For the two-module logit rule,
$\frac{\partial p}{\partial e_1}
=
\beta p(1-p)
=
\beta q,$
and
$\frac{\partial (1-p)}{\partial e_2}
=
\beta p(1-p)
=
\beta q.$

Hence the interior first-order conditions are
$c_1'(e_1)=\beta B_1q,
\;
c_2'(e_2)=\beta B_2q.$
Because each \(c_i\) is strictly convex, \(c_i'\) is strictly increasing and
therefore invertible on its range. Thus
$e_1=\psi_1(q),
\;
e_2=\psi_2(q),$
and consequently
$e_2-e_1=A(q).$

Now the score gap satisfies
\[
\Delta
=
(v_1+e_1)-(v_2+e_2)
=
(v_1-v_2)+(e_1-e_2)
=
\delta-A(q).
\]
Since
\[
p=\frac{1}{1+\exp(-\beta\Delta)},
\]
we have the identity
\[
q=p(1-p)
=
\frac{1}{4\cosh^2(\beta\Delta/2)}.
\]
Therefore every interior equilibrium corresponds to a root of the scalar
equation
$H(\Delta)=0,$
where
\[
H(\Delta)
:=
\Delta-\delta+
A\!\left(
\frac{1}{4\cosh^2(\beta\Delta/2)}
\right).
\]

We now study \(H\) on the capture region \((-\infty,0]\). Define
\[
q(\Delta)
:=
\frac{1}{4\cosh^2(\beta\Delta/2)}.
\]
For \(\Delta\le 0\), the function \(q(\Delta)\) is nondecreasing. Indeed,
$q'(\Delta)
=
-\beta q(\Delta)\tanh(\beta\Delta/2)$,
and \(\tanh(\beta\Delta/2)\le 0\) whenever \(\Delta\le 0\), so
$q'(\Delta)\ge 0.$

Because \(A\) is nondecreasing, the composition \(A(q(\Delta))\) is also
nondecreasing on \((-\infty,0]\). Therefore \(H\) is strictly increasing on
\((-\infty,0]\). More explicitly, if \(\Delta_1<\Delta_2\le 0\), then
$q(\Delta_1)\le q(\Delta_2)$,
so
$A(q(\Delta_1))\le A(q(\Delta_2))$,
and hence
$H(\Delta_2)-H(\Delta_1)
=
(\Delta_2-\Delta_1)
+
\bigl(A(q(\Delta_2))-A(q(\Delta_1))\bigr)
>0.$

Next,
$\lim_{\Delta\to-\infty}q(\Delta)=0.$
Since \(c_i'(0)=0\), we have
$\psi_i(0)=0,$
and hence
$A(0)=0.$
Thus
$\lim_{\Delta\to-\infty}H(\Delta)
=
-\infty.$
At the boundary point \(\Delta=0\),
$H(0)
=
-\delta+A(1/4).$

We now distinguish three cases:
\begin{itemize}
    \item If \(A(1/4)>\delta\), then \(H(0)>0\). Since \(H(\Delta)\to-\infty\) as
    \(\Delta\to-\infty\), and since \(H\) is continuous and strictly increasing on
    \((-\infty,0]\), there exists a unique root
    $\Delta^\star<0.$
    This root corresponds to a unique capturing interior equilibrium.
    \item If \(A(1/4)=\delta\), then \(H(0)=0\), so \(\Delta=0\) is a root. Since
    \(\Delta=0\) implies \(p=1/2\), the corresponding equilibrium satisfies
    $\sigma_1=\sigma_2=\frac12.$
    \item If \(A(1/4)<\delta\), then \(H(0)<0\). Since \(H\) is strictly increasing and
    tends to \(-\infty\) as \(\Delta\to-\infty\), it follows that
    $H(\Delta)<0
    \;
    \text{for all }\Delta\le 0.$
\end{itemize}

Thus, there is no root in the capture region, and hence no capturing interior
equilibrium.

It remains only to check that a root of \(H\) indeed defines an equilibrium.
Let \(\Delta^\star\) be a root of \(H\), and set
\[
q^\star:=
\frac{1}{4\cosh^2(\beta\Delta^\star/2)}.
\]
Define
$e_i^\star:=\psi_i(q^\star),
\; i\in\{1,2\}.$
Then
$c_i'(e_i^\star)=\beta B_iq^\star,
\; i\in\{1,2\},$
so the first-order conditions hold. Moreover, since \(H(\Delta^\star)=0\),
$\Delta^\star
=
\delta-A(q^\star)
=
(v_1-v_2)+(e_1^\star-e_2^\star),$
so the score-gap consistency condition also holds. Therefore
\((e_1^\star,e_2^\star)\) is a stationary point of the game. By the standing
own-payoff strict concavity assumption, these first-order conditions are
sufficient for best-response optimality. Hence \((e_1^\star,e_2^\star)\) is an
interior Nash equilibrium.

This proves all claims.
\end{proof}

\begin{remark}[Interpretation]
Theorem~\ref{thm:general_iff_capture} says that capture occurs exactly when the maximum amplification advantage of the lower-value module exceeds the
intrinsic value gap. The term \(A(1/4)\) is the largest possible effort advantage that module \(2\) can obtain from the equilibrium first-order conditions, because \(q=p(1-p)\le 1/4\). Thus the condition
$A(1/4)>\delta$
has a direct interpretation. Lower-value content captures access precisely when its cost-adjusted amplification advantage is large enough to overcome its
initial value disadvantage.
\end{remark}

\subsection{Quadratic costs and the sharp competition threshold}

The general criterion recovers a simple closed-form threshold under quadratic
costs. This case is useful because it makes the dependence on the competition
intensity \(\beta\) explicit.

Assume
$c_i(e)=\frac{\gamma_i}{2}e^2,
\;
\gamma_i>0,$
and consider content-valued benefits 
$B_i=v_i.$
Define the cost-adjusted amplification asymmetry
$a:=
\frac{v_2}{\gamma_2}
-
\frac{v_1}{\gamma_1}.$
The quantity \(a\) measures whether the lower-value module has a sufficient
cost advantage. If \(a\le 0\), then module \(2\) does not have a cost-adjusted
amplification advantage over module \(1\). If \(a>0\), module \(2\) can
potentially compensate for its lower intrinsic value by amplifying more cheaply.

\begin{corollary}[Sharp threshold under quadratic costs]
\label{cor:quadratic_capture_threshold}
Suppose \(c_i(e)=\frac{\gamma_i}{2}e^2\), \(B_i=v_i\), and the standing
own-payoff concavity assumptions hold. If \(a\le 0\), no capturing interior
equilibrium exists. If \(a>0\), define
$\beta^\star
:=
\frac{4(v_1-v_2)}
{v_2/\gamma_2-v_1/\gamma_1}.$
Then:
$
\beta<\beta^\star
\;\Longrightarrow\;
\text{no capturing interior equilibrium exists,}$
$\beta=\beta^\star
\;\Longrightarrow\;
\sigma_1=\sigma_2=\frac12,$
and
$\beta>\beta^\star
\;\Longrightarrow\;
\text{there exists a unique capturing interior equilibrium.}$

\end{corollary}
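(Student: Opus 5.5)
The plan is to derive the corollary directly from Theorem~\ref{thm:general_iff_capture} by computing the amplification-advantage function in closed form. With \(c_i(e)=\frac{\gamma_i}{2}e^2\) we have \(c_i'(e)=\gamma_i e\), \(c_i''=\gamma_i>0\), and \(c_i(0)=c_i'(0)=0\). So the standing assumptions of the section hold with \(m_i=\gamma_i\), together with the assumed own-payoff concavity. Inverting the marginal cost gives \(\psi_i(q)=\beta B_i q/\gamma_i=\beta v_i q/\gamma_i\). Hence
\[
A(q)=\beta q\left(\frac{v_2}{\gamma_2}-\frac{v_1}{\gamma_1}\right)=\beta a\, q,
\]
which is linear in \(q\).

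\textbf{Case \(a\le 0\).} Here \(A\) may be nonincreasing, so the monotonicity hypothesis of Theorem~\ref{thm:general_iff_capture} can fail. Instead I will argue directly with the scalar equation from that proof. Any interior equilibrium satisfies \(\Delta=\delta-A(q)=\delta-\beta a q\). Since \(q\ge 0\) and \(a\le 0\), this gives \(\Delta\ge\delta>0\), so there is no capture. This uses only the first-order characterization \(e_i=\psi_i(q)\) and \(\Delta=\delta-A(q)\), which were derived in that proof without the monotonicity assumption.

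\textbf{Case \(a>0\).} Now \(A\) is increasing on \([0,1/4]\), so Theorem~\ref{thm:general_iff_capture} applies verbatim. Its criterion \(A(1/4)>\delta\) reads \(\beta a/4>v_1-v_2\), that is, \(\beta>4(v_1-v_2)/a=\beta^\star\). The three regimes then follow directly from the theorem:
\begin{itemize}
\item \(\beta<\beta^\star\) gives \(A(1/4)<\delta\), so no capturing interior equilibrium exists;
\item \(\beta=\beta^\star\) gives \(A(1/4)=\delta\), so an interior equilibrium with balanced access \(\sigma_1=\sigma_2=\tfrac12\) exists;
\item \(\beta>\beta^\star\) gives a unique capturing interior equilibrium.
\end{itemize}

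\textbf{Main subtlety.} The one delicate point is that the concavity assumption \(\gamma_i>\beta^2 v_i/(6\sqrt3)\) depends on \(\beta\). The statement takes it as a standing hypothesis for whichever \(\beta\) is considered, so it should be invoked for that fixed \(\beta\) and not used to restrict the threshold itself. I would also note that the sign of \(v_i\) is implicitly nonnegative, since \(B_i=v_i\ge 0\). This keeps the \(\psi_i\) well defined on \([0,1/4]\), because \(\beta v_i q/\gamma_i\ge 0\) lies in the range of \(c_i'\) on \([0,\infty)\).
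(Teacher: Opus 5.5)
Your proof is correct and follows the paper's route: you compute $A(q)=\beta a q$, so $A(1/4)=\beta a/4$, and then read off the three regimes from Theorem~\ref{thm:general_iff_capture} by comparing $\beta a/4$ with $\delta$. The one difference is the case $a<0$: there $A$ is strictly decreasing, so the theorem's nondecreasing hypothesis fails, yet the paper still cites the theorem; your direct argument $\Delta=\delta-\beta a q\ge\delta>0$, from the first-order characterisation, closes that small gap.
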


\begin{proof}
For quadratic costs,
$c_i'(e)=\gamma_i e,$
so
$(c_i')^{-1}(x)=\frac{x}{\gamma_i}$.
Therefore
$\psi_i(q)
=
(c_i')^{-1}(\beta v_iq)
=
\frac{\beta v_iq}{\gamma_i}.$
Hence
$A(q)
=
\psi_2(q)-\psi_1(q)
=
\beta q
\left(
\frac{v_2}{\gamma_2}
-
\frac{v_1}{\gamma_1}
\right)
=
\beta aq.$
Thus
$A(1/4)=\frac{\beta a}{4}.$

If \(a\le 0\), then \(A(1/4)\le 0<\delta\), so Theorem~\ref{thm:general_iff_capture}
implies that no capturing interior equilibrium exists.

If \(a>0\), then Theorem~\ref{thm:general_iff_capture} gives capture if and
only if
$A(1/4)>\delta.$
Substituting \(A(1/4)=\beta a/4\), this becomes
$\frac{\beta a}{4}>\delta,$
or equivalently
$\beta>\frac{4\delta}{a}.$
Since \(\delta=v_1-v_2\), this is exactly
\[
\beta>
\frac{4(v_1-v_2)}
{v_2/\gamma_2-v_1/\gamma_1}
=
\beta^\star.
\]
The boundary case \(\beta=\beta^\star\) gives \(A(1/4)=\delta\), and therefore
balanced access \(\sigma_1=\sigma_2=1/2\). The result follows.
\end{proof}

\begin{remark}[Phase transition]
Corollary~\ref{cor:quadratic_capture_threshold} gives a phase-transition
interpretation. Below the threshold \(\beta^\star\), competition is not sharp
enough for the lower-value module's amplification advantage to overturn the
intrinsic value gap. At the threshold, access is balanced. Above the threshold,
the lower-value module captures access at the unique capturing interior
equilibrium.
\end{remark}

\section{Limits of Single-Slot Access Mechanisms}
\label{sec:impossibility}

The previous sections analysed a specific smooth access rule, namely the logit
contest success function. We now step back and ask a more general question:
what properties can any single-slot access mechanism satisfy?

This section shows that there is a fundamental tension between exact
winner-take-all efficiency and robustness. If a mechanism always assigns
probability one to the unique highest-scoring module, then it must be
discontinuous at ties. Thus, any robust and differentiable access rule must
smooth the transition between competing modules. This provides a structural
justification for studying probabilistic access mechanisms such as the logit
rule.

\begin{definition}[Single-slot access mechanism]
\label{def:single_slot_mechanism}
A single-slot access mechanism is a map
$p:\mathbb{R}^M\to \Delta^{M-1},
\;
s\mapsto p(s)=(p_1(s),\dots,p_M(s)),$
where \(s=(s_1,\dots,s_M)\) is the vector of module scores and \(p_i(s)\) is
the probability that module \(i\) receives the unique access slot.
\end{definition}

\begin{definition}[Exact efficiency consistency]
\label{def:exact_efficiency_consistency}
A single-slot access mechanism \(p\) is \emph{exactly efficiency consistent} if
whenever module \(i\) is the unique score maximiser, that module receives the
access slot with probability one:
$s_i>s_j \ \text{for all } j\neq i
\;\Longrightarrow\;
p_i(s)=1.$
\end{definition}

\begin{definition}[Budget robustness]
\label{def:budget_robustness}
A single-slot access mechanism \(p\) is \emph{budget robust} if it is
continuous as a function of the score vector \(s\). In other words, small
perturbations in scores induce only small perturbations in access
probabilities.
\end{definition}

\begin{definition}[Value monotonicity]
\label{def:value_monotonicity}
A single-slot access mechanism \(p\) satisfies \emph{value monotonicity} if,
for every \(i\), increasing only module \(i\)'s score cannot decrease its
access probability. Formally, if \(s,\tilde s\in\mathbb{R}^M\) satisfy
$\tilde s_i>s_i,
\;
\tilde s_j=s_j \ \text{for all } j\neq i,$
then
$p_i(\tilde s)\ge p_i(s).$

\end{definition}

\begin{definition}[Anonymity]
\label{def:anonymity}
A single-slot access mechanism \(p\) is \emph{anonymous} if relabelling the
modules only relabels the resulting access probabilities. Equivalently, for
every permutation \(\pi\) of \(\{1,\dots,M\}\), \;
$p_{\pi(i)}(\pi s)=p_i(s)
\;\text{for all }i,$
where
$(\pi s)_j:=s_{\pi^{-1}(j)}.$

\end{definition}

\begin{theorem}[Impossibility of exact efficiency and robustness]
\label{thm:impossibility_single_slot}
No budget-robust single-slot access mechanism is exactly efficiency consistent.

Consequently, no anonymous differentiable single-slot access mechanism can
simultaneously satisfy value monotonicity, budget robustness, and exact
efficiency consistency.
\end{theorem}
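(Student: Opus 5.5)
The plan is to argue by contradiction at a tie point, using only continuity and exact efficiency consistency. Suppose \(p\) is budget robust, i.e.\ continuous on \(\mathbb{R}^M\), and exactly efficiency consistent. Fix any tie profile in which modules \(1\) and \(2\) share the top score. The simplest choice is \(s^0=(0,0,-1,\dots,-1)\); for \(M=2\) this is just \(s^0=(0,0)\).

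The key step is to approach \(s^0\) along two sequences. Take \(s^n=(1/n,0,-1,\dots,-1)\). For each \(n\), module \(1\) is the unique maximiser, so efficiency consistency gives \(p_1(s^n)=1\). Now take \(\tilde s^n=(0,1/n,-1,\dots,-1)\). Here module \(2\) is the unique maximiser, so \(p_2(\tilde s^n)=1\), and since \(p(\tilde s^n)\in\Delta^{M-1}\) this forces \(p_1(\tilde s^n)=0\). Both sequences converge to \(s^0\). Continuity of \(p_1\) would then require \(p_1(s^0)=\lim_n p_1(s^n)=1\) and also \(p_1(s^0)=\lim_n p_1(\tilde s^n)=0\), which is a contradiction. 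Hence no continuous mechanism is exactly efficiency consistent.

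For the \enquote{consequently} clause, note that differentiability implies continuity, so any differentiable mechanism is budget robust in the sense of Definition~\ref{def:budget_robustness}. By the first part it therefore cannot be exactly efficiency consistent. Adding anonymity and value monotonicity only shrinks the class of mechanisms, so the impossibility persists. There is one wording point to handle. \enquote{Simultaneously satisfy budget robustness and exact efficiency consistency} is already impossible by the first part, so the corollary holds a fortiori. I would remark that the extra axioms are not needed for the contradiction: they are listed only to stress that even well-behaved rules are excluded.

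There is no real technical obstacle. The only care needed is the choice of the tie profile for general \(M\), so that the perturbed profiles genuinely have a unique maximiser; the \(-1\) padding for the other modules ensures this. The other point is to use the simplex constraint to convert \(p_2=1\) into \(p_1=0\).
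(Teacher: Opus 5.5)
Your proof is correct and is essentially the paper's argument. The paper approaches the same tie along the single line \(s(x)=(x,0,-K,\dots,-K)\) from \(x>0\) and from \(x\in(-K,0)\), whereas you approach it by raising module~1's score in one sequence and module~2's score in the other; both versions then use the simplex constraint to turn \(p_2=1\) into \(p_1=0\), invoke continuity, and handle the second clause exactly as you do (differentiability implies continuity, and the extra axioms only shrink the class).
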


\begin{proof}
Assume, for contradiction, that there exists a single-slot access mechanism
$p:\mathbb{R}^M\to\Delta^{M-1}$
that is both budget robust and exactly efficiency consistent.

Fix two distinct modules, say modules \(1\) and \(2\). Let \(K>0\), and
consider the one-parameter family of score vectors
$s(x):=(x,0,-K,\dots,-K)\in\mathbb{R}^M,
\; x\in\mathbb{R}$.
Define
$f(x):=p_1(s(x)).$
We analyse \(f\) near \(x=0\).
If \(x>0\), then module \(1\) is the unique score maximiser in \(s(x)\), since
$x>0>-K.$
By exact efficiency consistency,
$f(x)=p_1(s(x))=1
\;\text{for all }x>0.$

If \(-K<x<0\), then module \(2\) is the unique score maximiser in \(s(x)\),
since
$0>x>-K.$

Again by exact efficiency consistency,
$p_2(s(x))=1
\;\text{for all }x\in(-K,0).$

Since the coordinates of \(p(s(x))\) sum to one, this implies
$f(x)=p_1(s(x))=0
\;\text{for all }x\in(-K,0).$
Therefore,
$f(x)=1 \;\text{for all }x>0,
\;
f(x)=0 \;\text{for all }x\in(-K,0).$
Hence
$\lim_{x\downarrow 0} f(x)=1,
\;
\lim_{x\uparrow 0} f(x)=0.$
Thus \(f\) is discontinuous at \(x=0\).

However, the map \(x\mapsto s(x)\) is continuous, and \(p\) is budget robust,
hence continuous. Therefore the composition
$f(x)=p_1(s(x))$
must also be continuous. This is a contradiction.
Thus, no budget-robust single-slot access mechanism can be exactly efficiency consistent.

For the second claim, differentiability implies continuity. Therefore, no
differentiable single-slot access mechanism can satisfy both budget robustness and exact efficiency consistency. Adding anonymity and value monotonicity does not remove this contradiction. Hence, no anonymous differentiable single-slot access mechanism can simultaneously satisfy value monotonicity, budget robustness, and exact efficiency consistency.
\end{proof}

\begin{remark}[Interpretation]
Theorem~\ref{thm:impossibility_single_slot} shows that exact winner-take-all selection and robustness are incompatible in the single-slot setting. A deterministic argmax rule is exactly efficiency consistent, but it is necessarily
discontinuous at score ties. Conversely, a smooth probabilistic rule can be
robust, but it cannot assign probability one to the unique maximiser at every score profile. Thus, the use of a smooth access rule is not merely analytically
convenient. It is structurally motivated by the need for stability under small perturbations of scores.
\end{remark}

\begin{remark}[Connection to the logit access rule]
The logit access rule used in this paper satisfies budget robustness and value
monotonicity, but it does not satisfy exact efficiency consistency for finite
\(\beta\). Instead, it approaches exact winner-take-all selection only in the
limit as \(\beta\to\infty\). The impossibility theorem explains why this is unavoidable. Exact efficiency and robustness cannot both hold in a single-slot mechanism.
\end{remark}
\section{Discussion}
\label{sec:discussion}

We have modelled conscious access as a contest for a scarce broadcast slot. The aim is not to reduce consciousness to a single game, but to isolate one formal component of conscious access: competition among candidate representations for
global availability. This turns a common qualitative claim in workspace and attention theories into an allocation problem with equilibrium, capture, computation, and impossibility results.

The capture results show that access need not track intrinsic value alone. In the two-module setting, a lower-value representation captures access when its amplification advantage is large enough to overcome the intrinsic value gap. The general criterion expresses this through the function
$A(q)=\psi_2(q)-\psi_1(q),$
which measures the cost-adjusted amplification advantage of the lower-value module. Capture occurs precisely when
$A(1/4)>v_1-v_2.$
Thus, salience-driven or attention-driven capture can be interpreted as an
equilibrium consequence of asymmetric amplification costs, rather than merely
as a failure of selection.

In the quadratic-cost case, the capture condition becomes a sharp threshold in the competition intensity \(\beta\). Below the threshold, access remains value-tracking; above it, the lower-value module can dominate access at the unique capturing equilibrium. This gives a phase-transition interpretation. Increasing competition intensity can improve selectivity, but can also make the system more vulnerable to cheap-to-amplify representations.

The impossibility theorem gives a complementary reason for using smooth
probabilistic access rules. A deterministic argmax rule is exactly
winner-take-all, but it is discontinuous at ties. Hence, no single-slot access mechanism can be both exactly winner-take-all efficient and robust to small score perturbations. The logit rule smooths this transition. For finite
\(\beta\), it remains robust but does not assign probability one to the highest score. In this sense, smoothing is not only analytically convenient, but structurally necessary if robustness is required.

Finally, the computation result shows that the model is algorithmically
tractable in the uniqueness regime. Under the curvature-dominance condition, the unique pure Nash equilibrium of the general \(M\)-module access contest can
be approximated efficiently by projected pseudo-gradient dynamics. The same parameters that govern stability also govern tractability. Stronger cost curvature improves conditioning, while larger competition intensity, larger benefit scale, or more modules increase strategic coupling.
\section{Conclusion}
\label{sec:conclusion}

This paper introduced a game-theoretic framework for modelling competition for conscious access. We formalised access as a contest among internal modules competing for a scarce broadcast slot, established equilibrium guarantees under interpretable conditions, characterised when lower-value content can capture access through amplification advantages, proved efficient computation of the unique pure Nash equilibrium under curvature-dominance assumptions, and showed that robust single-slot access rules cannot be exactly winner-take-all.
Together, these results show that competition for conscious access can be studied as a formal allocation problem. The framework does not replace cognitive or neural theories of access; rather, it complements them by making explicit the strategic and algorithmic structure of selection under a scarce broadcast constraint.
The model is intentionally stylised. We focus on a single broadcast slot, static one-shot interaction, and effort choices by modules with fixed values and costs. This abstraction is useful because it allows the access problem to be analysed with standard tools from game theory, but it also leaves several important features outside the present model.

\subsection{Limitations and Future Work}First, conscious access may involve more than one representation at a time. Extending the model to \(K>1\) simultaneous access slots would allow one to study richer forms of partial access, shared broadcast, or top-\(K\) selection. Second, the current model is static. In realistic cognitive systems, values, costs, and access probabilities may evolve over time through learning, adaptation, fatigue, or changing task demands. A dynamic version of the model could study repeated access contests and the long-run stability of capture. Third, we treat module values and costs as fixed primitives. Future work could study how these quantities are inferred, learned, or endogenously shaped by task context and prior experience.

A further direction is welfare and efficiency. The present paper focuses on equilibrium, capture, computation, and mechanism-level impossibility. A natural next step is to compare equilibrium access with a planner benchmark and derive welfare-loss or price-of-anarchy bounds in terms of the number of modules, competition intensity, benefit scale, and cost curvature. This would quantify the cost of internal competition more directly.

Finally, the logit access rule is only one possible smooth contest success function. It is natural because it provides a smooth approximation to winner-take-all selection and connects to probabilistic choice, but other access mechanisms may be appropriate in different cognitive or neural settings. Extending the analysis to broader classes of contest success functions would clarify which results are specific to the logit rule and which are structural properties of access competition more generally.

\bibliography{bib}

\end{document}